\newtheorem{defn}[theorem]{Definition}
\newtheorem{alg}{\bf{Algorithm}}
\title{Looking for Bird Nests: Identifying Stay Points with Bounded Gaps}
\author{Ali Gholami Rudi\thanks{Department of Electrical and Computer Engineering,
	Bobol Noshirvani University of Technology, {\tt gholamirudi@nit.ac.ir}}}
\begin{document}
\thispagestyle{empty}
\maketitle

\begin{abstract}
A stay point of a moving entity is a region in which
it spends a significant amount of time.
In this paper, we identify all stay points of an entity in
a certain time interval, where the entity
is allowed to leave the region but it should return within
a given time limit.
This definition of stay points seems more natural in many
applications of trajectory analysis than those
that do not limit the time of entity's absence from the region.
We present an $O(n \log n)$ algorithm for trajectories in $R^1$
with $n$ vertices and a $(1 + \epsilon)$-approximation algorithm
for trajectories in $R^2$ for identifying all such stay points.
Our algorithm runs in $O(kn^2)$, where $k$ depends on $\epsilon$
and the ratio of the duration of the trajectory to the allowed gap time.
\end{abstract}

\section{Introduction}

The question, asking where a moving entity, like an animal or
a vehicle, spends a significant
amount of its time is very common in trajectory analysis \cite{zheng15}.
These regions are usually called popular places,
hotspots, interesting places, stops, or stay points in the literature.
There are several definitions of stay points and different
techniques have been presented to find them \cite{benkert10,gudmundsson13,fort14,perez16,arboleda17}.
However, from a geometric perspective, which is the focus
of the present paper, few papers are dedicated to this problem.

Benkert et al.~\cite{benkert10} defined a popular place to be
an axis-aligned square of fixed side length in the plane which is visited
by the most number of distinct trajectories.  They modelled a
visit either as the inclusion of a trajectory vertex or
the inclusion of any portion of a trajectory edge, and
presented optimal algorithms for both cases.
Gudmundsson et al.~\cite{gudmundsson13} introduced several
different definitions of trajectory hotspots.  In some of
these definitions, a hotspot is an axis-aligned square that contains
a contiguous sub-trajectory with the maximum duration and
in others it is an axis-aligned square in which the entity spends
the maximum possible duration but its presence may not be contiguous.
For hotspots of fixed side length,
for the former they presented an $O(n \log n)$ algorithm
and for the latter they presented an algorithm with the
time complexity $O(n^2)$, where $n$ is the number of trajectory
vertices.
Damiani et al.~\cite{damiani14}, like some of the cases
considered by Gudmundsson et al.~\cite{gudmundsson13},
allowed gaps between stay point and presented
heuristic algorithms for finding them.

There are applications in which we need to identify regions
that are regularly visited.
Djordjevic et al.~\cite{djordjevic11} concentrated on a
limited form of this problem and presented an algorithm
to decide if a region is visited almost regularly (in fixed
periods of time) by an entity.
However, in many applications that require spatio-temporal
analysis, these definitions are inadequate.
For instance, a bird needs to return to its nest regularly
to feed its chicks.  In other words, the bird may leave
its nest but it cannot be away for a long time.
We would like to find all possible locations for its nest.

Arboleda et al.~\cite{arboleda17} studied a problem very similar to
the focus of the present paper, except that they assumed the algorithm
takes as input, in addition to the trajectories, a set of
polygons as potential stay points or interesting sites.
They presented a simple algorithm to identify stay points among the
given interesting sites; their algorithm computes the longest
sub-trajectory visiting each interesting site for each trajectory,
while allowing the entity to leave the site for some predefined
amount of time.
They also mentioned motivating real world examples to show that in some
applications, it makes sense to allow the entity to leave the site for
short periods of time, like leaving a cinema for the bathroom.

Our goal is identifying all trajectory stay points, i.e.~axis-aligned
squares in which the entity is always present,
except for short periods of time,
where both the side length of the squares and the allowed gap time are
specified as inputs of the algorithm and assumed to be fixed.
Note that we ignore the duration in which the entity stays in a region.
If, for instance, a region with the maximum duration among
our stay points is desired, our algorithm can be combined with
those that find a stay point with the maximum duration,
but allow unbounded entity absence,
like the ones presented by Gudmundsson et al.~\cite{gudmundsson13}.

This paper is organized as follows.
In Section~\ref{sprel}, we introduce the notation and
define some of the main concepts of this paper.
In Section~\ref{sone}, we handle trajectories in $R^1$
and present an algorithm to find all stay points of
such trajectories with the time complexity $O(n \log n)$.
We focus on trajectories in $R^2$ in Section~\ref{stwo}
and present an approximation algorithm for finding their
stay points.  We conclude this paper by showing that the
complexity of the stay map of two-dimensional trajectories
can be $\Theta(n^2)$.

\section{Preliminaries}
\label{sprel}
A trajectory $T$ describes the movement of an entity in a certain
time interval.
Trajectories can be modelled as a set of vertices and edges in
the plane.
Each vertex of $T$ represents a location at which the entity was observed.
The time of this observation is indicated as the timestamp of the vertex.
We assume that the entity moves in a straight line and with
constant speed from a vertex to the next; the edges of the
trajectory connect its contiguous vertices.
A sub-trajectory of $T$ for a time interval $(a, b)$ is
denoted as $T(a, b)$, and describes the movement of the entity
from time $a$ to time $b$.  Except possibly the first and the
last vertices of a sub-trajectory, which may fall on an edge of $T$,
its set of vertices is a subset of those of $T$.
The stay points considered in this paper are formally described
in Definition~\ref{dstaypoint}.  We use the symbols defined
here, such as $g$ and $s$, throughout the paper without
repeating their description.  Also, any square that appears in
the rest of this paper is axis-aligned and has side length $s$.
\begin{figure}
	\centering
	\includegraphics[width=\linewidth]{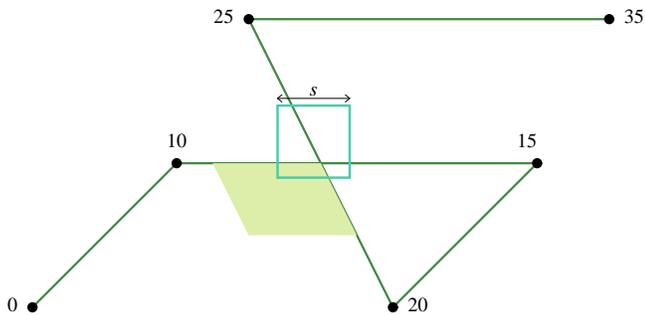}
	\caption{An example two-dimensional trajectory.
	The number near each vertex shows its timestamp.
	The green region is the stay map
	and the green square is a stay point ($g = 15$).}
	\label{fsmap}
\end{figure}

\begin{defn}
\label{dstaypoint}
A stay point of a trajectory $T$ in $R^2$ is a square of fixed
side length $s$ in the plane such that the entity never spends
more than a given time limit $g$ outside it continuously.
\end{defn}

The goal of this paper is identifying all stay points of a trajectory,
or its stay map (Definition~\ref{dstaymap}).
Note that the parameters $s$ and $g$ are assumed to be fixed and
specified as inputs of the algorithm.

\begin{defn}
\label{dstaymap}
The stay map $M$ of a trajectory $T$ in $R^2$ is a subset of
the plane such that every square of side length $s$ whose lower
left corner is in $M$ is a stay point of $T$, and the lower left
corners of all stay points of $T$ are in $M$.
\end{defn}

Figure~\ref{fsmap} shows an example trajectory, its stay map,
and one of its stay points.  Note that every square, whose lowest
left corner is in the stay map, is a stay point.
Although these definitions are presented for trajectories in $R^2$,
they can be trivially adapted for one-dimensional trajectories,
as we do in Section~\ref{sone}.

\section{Stay Maps of One-Dimensional Trajectories}
\label{sone}
Let $T$ be a trajectory in $R^1$.
A stay point of $T$ is an interval of length $s$
such that the entity never leaves it for a period of time
longer than $g$.
The stay map $M$ of $T$ is the region containing the left end points
of all stay points of $T$.
In this section, we present an algorithm for finding $M$.
\begin{figure}
	\centering
	\includegraphics[width=\columnwidth]{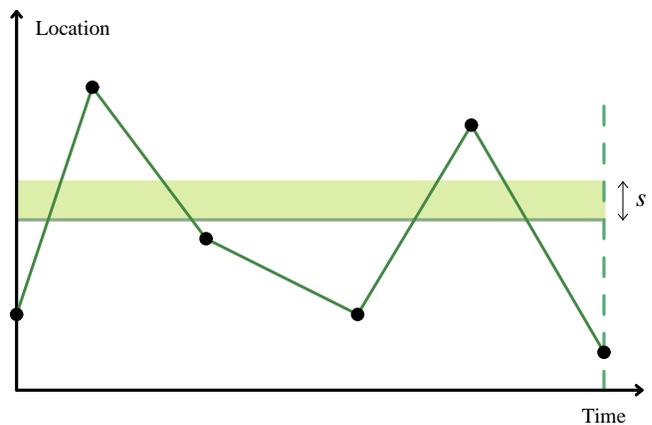}
	\caption{Mapping a one-dimensional trajectory to
	the time-location plane.
	The green rectangle of height $s$ shows a possible stay point.}
	\label{fonemap}
\end{figure}

\begin{lemma}
\label{lonecon}
The stay map $M$ of a trajectory $T$ in $R^1$ is continuous.
\end{lemma}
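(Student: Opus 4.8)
The plan is to show that $M$ is an interval, which in $R^1$ is equivalent to being connected (``continuous''). It suffices to prove that $M$ is convex: whenever two left end points $\ell_1 < \ell_2$ both lie in $M$, every $\ell$ with $\ell_1 \le \ell \le \ell_2$ also lies in $M$.

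First I would set up the time--location picture used throughout this section (Figure~\ref{fonemap}): let $x(t)$ denote the position of the entity at time $t$, which is a continuous piecewise-linear function, and for a left end point $\ell$ let $S(\ell)$ be the horizontal strip spanning $[\ell, \ell + s]$. The entity is outside the candidate stay point exactly when $x(t) < \ell$ (\emph{below} $S(\ell)$) or $x(t) > \ell + s$ (\emph{above} $S(\ell)$). Call a maximal time interval during which the entity stays outside a \emph{gap} of $S(\ell)$; then $\ell \in M$ if and only if every gap of $S(\ell)$ has length at most $g$.

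The key structural step is the observation that, because $x(t)$ is continuous, no single gap can contain both a moment below $S(\ell)$ and a moment above it: passing from $x(t) < \ell$ to $x(t) > \ell + s$ forces $x$ to attain some value in $[\ell, \ell + s]$ by the intermediate value theorem, which would place the entity inside the strip and terminate the gap. Hence every gap is either entirely below (its times satisfy $x(t) < \ell$) or entirely above (its times satisfy $x(t) > \ell + s$). I would then exploit monotonicity: the set $\{t : x(t) < \ell\}$ only grows as $\ell$ increases, so a below-gap of $S(\ell)$ is contained in some below-gap of $S(\ell_2)$ (since $\ell \le \ell_2$) and therefore has length at most $g$; symmetrically, $\{t : x(t) > \ell + s\}$ only grows as $\ell$ decreases, so an above-gap of $S(\ell)$ is contained in some above-gap of $S(\ell_1)$ and again has length at most $g$. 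Combining the two cases, no gap of $S(\ell)$ exceeds $g$, so $\ell \in M$, which establishes convexity and hence continuity.

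The main obstacle I expect is justifying the two containments cleanly, namely that raising the strip can only lengthen below-gaps while lowering it can only lengthen above-gaps, and that the gap end points (where the entity touches the strip boundary, or where the trajectory's time domain begins or ends) are treated consistently in both the intermediate strip and the two extreme strips. Once the ``below or above'' dichotomy is in place, the remainder is a direct monotonicity argument, so I would concentrate the care on the continuity/intermediate-value step and on the boundary cases at the start and end of the trajectory.
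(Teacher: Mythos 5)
Your proposal is correct and is essentially the paper's own argument: the paper also takes $p < v < q$ with $p,q$ in $M$, bounds excursions to the left of $v$ by the fact that $[q, q+s]$ is a stay point and excursions to the right of $v+s$ by the fact that $[p, p+s]$ is a stay point, exactly matching your below-gap/above-gap dichotomy and monotone containments. The only differences are presentational: the paper argues by contradiction while you prove convexity directly, and you make explicit (via the intermediate value theorem) the step that each gap is entirely below or entirely above, which the paper leaves implicit.
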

\begin{proof}
To obtain a contradiction, let points $p$ and $q$ be inside $M$ and
$v$ be outside it such that $p < v < q$ (our assumption that $M$ is
non-continuous implies the existence of this triple).
Let $r_p$, $r_q$, and $r_v$ be three segments of length $s$,
whose left corners are at $p$, $q$, and $v$, respectively.
Clearly, $r_p$ and $r_q$ are stay points while $r_v$ is not.
Whenever the entity moves to the left of $v$, it must return to $q$
before the time limit $g$ to visit $r_q$.
Also, whenever the entity moves beyond the right
end point of $r_v$ (which is outside $r_p$), it must return
to $r_p$ before the time limit.  Therefore, it can never be
outside $r_v$ for more than time $g$ and this implies that $v$
is also a stay point and inside $M$, which yields the desired
contradiction.
\end{proof}

\begin{lemma}
\label{lonecheck}
Given a trajectory $T$ with $n$ vertices in $R^1$, we can answer in $O(n)$
time whether a point $p$ is in the stay map or not, and if not, whether
the stay map is on its left side or on its right side.
\end{lemma}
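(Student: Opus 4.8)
The plan is to handle the two parts---deciding membership, and failing that deciding the side---with a single left-to-right sweep over the edges of $T$ in time order. For the membership test, the candidate stay point associated with $p$ is the interval $I = [p, p+s]$, and I need to check whether the entity is ever outside $I$ for a continuous duration exceeding $g$. Since the entity moves at constant speed along each edge, its position is a linear, hence monotone, function of time on that edge, so the set of times during the edge at which the position lies in $I$ is a single contiguous subinterval; I can compute it in $O(1)$ from the at most two times at which the edge crosses the boundary values $p$ and $p+s$. Sweeping the edges, I would maintain the length of the current maximal gap, that is, the elapsed continuous time during which the entity has been outside $I$, resetting it to zero whenever the entity re-enters $I$ and extending it across edge boundaries otherwise. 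If this length ever exceeds $g$, the interval $I$ is not a stay point and $p \notin M$; otherwise $p \in M$. Each edge is processed in $O(1)$ time, so the test runs in $O(n)$.

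When $p \notin M$, to decide the side I rely on one observation: throughout any continuous gap the entity cannot pass from the left of $I$ (position $< p$) to its right (position $> p+s$) without entering $I$, by the continuity of the motion. Hence every gap lies entirely on one side of $I$, and during the same sweep I would additionally record, for each gap of length exceeding $g$, whether it occurs on the left or on the right. The claim is that a left gap forces $M$ to lie entirely to the left of $p$. Indeed, for any $p' > p$ the segment $r_{p'} = [p', p'+s]$ is a rightward translate of $I$, so whenever the entity is to the left of $I$ it satisfies position $< p < p'$ and is therefore also to the left of $r_{p'}$; the offending gap thus persists for $r_{p'}$, so no $p' > p$ is a stay point. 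Together with $p \notin M$ and the continuity of $M$ established in Lemma~\ref{lonecon}, this shows $M \subseteq (-\infty, p)$, i.e.~the stay map lies on the left. The symmetric argument shows that a right gap forces $M$ to lie on the right.

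The only subtlety, and the step I expect to require the most care, is the case in which bad gaps occur on both sides: then the two arguments above simultaneously exclude everything to the left and everything to the right of $p$, so $M$ is empty and no stay point exists at all. Since the lemma asks for a side only when $p \notin M$, I would report \emph{left}, \emph{right}, or \emph{no stay point} according to which kinds of bad gap were encountered, all within the same $O(n)$ pass. A minor point to settle along the way is the treatment of gaps that begin at the first vertex or end at the last, which are measured only within the observed time span of $T$; this does not affect the per-edge $O(1)$ bookkeeping.
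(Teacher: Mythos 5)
Your proposal is correct and follows essentially the same approach as the paper: a single $O(n)$ pass over the edges to compute the maximal gaps outside $[p, p+s]$ and compare them to $g$, followed by the observation that a too-long gap on the left of $p$ excludes every candidate $p' > p$ (and symmetrically for the right). You are somewhat more explicit than the paper---spelling out via continuity that each gap lies entirely on one side of the interval, and noting that gaps on both sides imply an empty stay map---but these are refinements of the same argument, not a different route.
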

\begin{proof}
Define $r$ as segment $pq$, in which $q$ is $p + s$.
Testing each trajectory edge in order, we can compute the duration
of each maximal sub-trajectory outside $r$ and check if it is at most $g$.
Therefore, we can decide if $p$ is the left end point of a
stay point in $O(n)$ time.  If it is not a stay point,
there is at least one time interval, in which the entity spends
more than time $g$ on the left or on the right side of $r$.
Without loss of generality, suppose it does so on the left side.
Then, no point on the right of $r$ can be a stay point and
therefore the whole stay map of $T$ must appear on the left of $p$.
This again can be tested in $O(n)$ time by processing trajectory
edges.
\end{proof}

An event point of a trajectory $T$ in $R^1$ is a point on the
line in which one of the following occurs:
i) a trajectory vertex lies on that point,
ii) the time gap between two contiguous visits to that
point is exactly $g$.

\begin{lemma}
\label{loneends}
The stay map $M$ of a trajectory $T$ starts and ends at an event point
or at distance $s$ from one.
\end{lemma}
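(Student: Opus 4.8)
The plan is to reduce the claim to a statement about where a single scalar quantity crosses the threshold $g$. First I would pass to the time--location plane of Figure~\ref{fonemap}: the trajectory becomes a piecewise-linear curve $x(t)$, and a candidate stay point with left end point $p$ becomes the horizontal strip of levels $[p,p+s]$. The entity is outside this strip exactly when $x(t)<p$ or $x(t)>p+s$, and since the curve cannot pass from below the strip to above it without entering the strip, every maximal gap lies wholly below level $p$ or wholly above level $p+s$. Consequently each gap is delimited by two consecutive visits of the curve to a single level, namely $p$ for a low gap and $p+s$ for a high gap. Writing $G(p)$ for the length of the longest gap, Lemma~\ref{lonecheck} gives $M=\{p : G(p)\le g\}$, and by Lemma~\ref{lonecon} this set is an interval; I treat its left end point $p_0$, the right end point being symmetric.

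Next I would record how $G$ depends on $p$. The time at which the curve meets a fixed level on a fixed edge is an affine function of that level, so as long as the two visits delimiting a gap stay on the same pair of edges the length of that gap is affine in $p$; hence $G$ is continuous and piecewise affine, and its breakpoints occur only when a delimiting visit crosses a trajectory vertex, that is, when $p$ or $p+s$ equals a vertex position. At the end point $p_0$, continuity forces $G(p_0)=g$, and the maximum is realised by some gap of length exactly $g$. If this gap is low, its two delimiting visits are to level $p_0$ and are $g$ apart in time, so $p_0$ is an event point of type (ii); if it is high, the same reasoning applied to level $p_0+s$ shows that $p_0+s$ is an event point, and hence $p_0$ lies at distance $s$ from one. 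If instead $p_0$ is forced by a breakpoint of $G$, then $p$ or $p+s$ equals a vertex position there, so $p_0$ is an event point of type (i) or lies at distance $s$ from one. Either way $p_0$ has the asserted form.

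The step that needs the most care is the behaviour at the two ends of the observation interval. A gap that abuts the first or last vertex of $T$ is bounded on one side by that vertex rather than by a return of the entity to the strip, so it is delimited by only a single crossing and is therefore not captured by the type-(ii) condition, which speaks of two contiguous visits. I expect this to be the main obstacle. Resolving it requires either the modelling convention that absence before the first visit and after the last visit to the strip is not counted, or an explicit argument anchoring such a gap at the first or last vertex, which is an event point of type (i); under the former convention every binding gap is internal and the dichotomy of the previous paragraph applies verbatim. Degenerate tangencies, where the curve touches a level at a vertex instead of crossing it, also need checking, but they occur only when $p$ or $p+s$ meets a vertex position and so fall under type (i). With the boundary convention fixed, what remains is the routine verification that the continuous piecewise-affine function $G$ attains the value $g$ only at event points and their distance-$s$ translates.
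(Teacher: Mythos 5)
Your reduction to the threshold function $G(p)$ is a genuinely different route from the paper's, which argues by perturbation: if neither $p$ nor $p+s$ were an event point, the interval could be slid slightly to the left and every excursion would still return within time $g$, so the slid interval would remain a stay point, contradicting that $p$ is the left end of $M$. Your route is workable, but one step in it is false and you use it essentially: $G$ is \emph{not} continuous. When the level $p+s$ (or $p$) crosses a local extremum of the trajectory --- necessarily a vertex position --- two maximal gaps merge into one, and the merged gap's length is the \emph{sum} of the two pieces, not their maximum, so $G$ jumps. Concretely, let the entity leave the strip upward, dip back to a vertex at height $m>p$, leave upward again, and only then return: for $p+s$ slightly above $m$ there are two gaps of lengths $g_1$ and $g_2$, while for $p+s$ slightly below $m$ there is a single gap of length roughly $g_1+g_2$. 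If $\max(g_1,g_2)<g<g_1+g_2$ and all other gaps are short, the left end of $M$ sits exactly at $p_0=m-s$ with $G(p_0)=\max(g_1,g_2)<g$; so ``continuity forces $G(p_0)=g$'' fails, and there is no gap of length exactly $g$ for your type-(ii) identification to use. The repair stays inside your framework: jumps of $G$, like its slope breakpoints, occur only where $p$ or $p+s$ equals a vertex position, so at $p_0$ either $G$ is continuous there, in which case $G(p_0)=g$ and your low/high dichotomy yields a type-(ii) event point, or $G$ jumps there, in which case $p_0$ or $p_0+s$ is a vertex position, a type-(i) event point. Your breakpoint clause as written does not cover the second case, because the justification you give (visits sliding affinely along a fixed pair of edges) accounts only for slope changes of a continuous function, not for discontinuities.

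Your concern about gaps abutting the ends of the trajectory is well founded --- in fact it exposes a hole in the lemma as stated, which the paper's own proof (speaking only of the entity leaving and returning) silently ignores. If the binding constraint at $p_0$ is the initial absence, say the entity descends from far away and first touches the level $p_0+s$ at time exactly $g$, after which it oscillates rapidly nearby, then neither $p_0$ nor $p_0\pm s$ need be an event point: type (ii) requires two contiguous visits and here there is only one, and no vertex need lie at any of these positions. Of your two suggested fixes, only the first (the convention that absence before the first visit and after the last one does not count) closes this case with the lemma as stated; the second does not, because the first or last vertex is an event point at \emph{its own} position, which is unrelated to $p_0$. Alternatively, one can enlarge the event-point definition with the at most two points that are first visited exactly at time $g$ after the start, or last visited exactly at time $g$ before the end (each is just the entity's position at the corresponding time); this repairs both the lemma and the binary search built on it in Lemma~\ref{toneevents} and Theorem~\ref{tonealg} without affecting the $O(n\log n)$ bound.
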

\begin{proof}
By Lemma~\ref{lonecon}, $M$ is continuous.
Let $p$ be the left end point of the stay map $M$.
Let $r = pq$ be a segment such that $q = p + s$.
Whenever the entity leaves $r$ through $p$, it returns by
passing it again within the time limit $g$.  Similarly, if the
entity leaves $r$ through $q$, it visits $q$ again within
time $g$.  Suppose, for the sake of contradiction, that
$p$ is not an event point.  Then, we can move $r$ slightly
to the left to obtain $r'$.  $r'$ must also be a stay point
because every time the entity leaves it from either of its end points,
it returns within time $g$, because neither $p$ nor $q$ is an event
point (the time between the contiguous visits of the entity is
not exactly $g$ and they are not on a trajectory vertex).
This contradicts the choice of $p$.
A similar argument shows that the right endpoint of $M$ must
also be an event point or at distance $s$ from one.
\end{proof}

\begin{lemma}
\label{toneevents}
The set of event points of a trajectory with $n$ vertices
can be computed in $O(n \log n)$ time.
\end{lemma}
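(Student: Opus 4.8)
The plan is to treat the two kinds of event points separately. The points of type (i) are exactly the $n$ vertices of $T$, which are available directly from the input in $O(n)$ time, so the whole cost will come from type (ii). For those, I would work in the time-location plane of Figure~\ref{fonemap}, where $T$ becomes a piecewise-linear function $f$ (location as a function of time) with $n$ vertices, and a location $p \in R^1$ corresponds to the horizontal level $y = p$. The visits to $p$ are the crossings of level $p$ by $f$; two visits are contiguous exactly when $f$ stays strictly on one side of $p$ between them, and the gap is the length of that maximal interval. Hence a type-(ii) event point is a level at which some maximal interval of the super-level set $\{t : f(t) > p\}$ or of the sub-level set $\{t : f(t) < p\}$ has length exactly $g$. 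Replacing $f$ by $-f$ turns sub-level excursions into super-level ones, so it suffices to handle one side and run the procedure twice.

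Second, I would lower a horizontal sweep line, decreasing the level $p$ from $+\infty$, and maintain the connected components (intervals) of the current super-level set in a balanced search tree ordered by position; the only levels at which this structure changes are the $n$ vertex values, which I sort once in $O(n \log n)$ time. A local maximum of $f$ gives birth to a new component, a local minimum merges two components into one, and a monotone (non-extremal) vertex only moves one endpoint-crossing from one incident edge to the adjacent one. Each such event is handled with $O(1)$ tree operations, for $O(n \log n)$ in total.

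Third, I would observe that between two consecutive vertex values the two endpoints of each component slide along fixed edges, so its length is a linear function of $p$ whose coefficients change only at an event that actually touches that component. Charging one new linear piece to each birth, merge, and monotone transition shows that the total number of such pieces, summed over all components and their lifetimes, is $O(n)$. A linear function attains the value $g$ at most once, so each piece yields at most one type-(ii) event point. I would detect it by testing, at the moment a piece ends (that is, when the event terminating it is processed, which touches only $O(1)$ components), whether the length passed through $g$ on the piece's level range and, if so, solving the one linear equation for the level. This performs $O(n)$ constant-time tests overall; since the super-level excursion length is monotone as the level decreases, the reported levels are moreover distinct.

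The main obstacle is the bookkeeping in the third step: one must argue that no component needs to be re-examined between the events that touch it, so that the $O(n)$ bound on linear pieces, rather than the naive $O(n^2)$ of re-testing every component at every vertex value, governs the running time. Once this charging argument is in place, combining the $n$ type-(i) points with the $O(n)$ type-(ii) points found on each side and adding the $O(n \log n)$ sorting cost gives the claimed bound.
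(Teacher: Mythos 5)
Your proposal is correct and takes essentially the same approach as the paper's proof: a sweep over location levels in the time-location plane, maintaining the maximal excursion intervals whose lengths vary linearly in the level between the $n$ vertex events (births at local extrema, merges, and endpoint transitions at monotone vertices), and detecting the $O(n)$ levels at which such a length equals exactly $g$, for $O(n \log n)$ total time. The differences are only organizational---you run two symmetric passes (super-level and sub-level sets) and check retrospectively when each linear piece ends, whereas the paper sweeps once over all segments cut by the trajectory edges and schedules/updates events where a segment length will become $g$.
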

\begin{proof}
We map the trajectory to a plane such that a trajectory
vertex at position $p$ with timestamp $t$ is mapped to point $(t, p)$
(see Figure~\ref{fonemap}).
Obviously, the polygonal path representing the trajectory in this
plane is $y$-monotone.
We perform a plane sweep by sweeping a line parallel to the $x$-axis
in the positive direction of the $y$-axis in this plane.

The edges in this plane chop the sweep line into several
segments.  We maintain the length of every such segment during
the sweep line algorithm.  When the sweep line intersects a
trajectory vertex $v$, an event point is recorded and,
based on the other end point of the edges that meet
at that vertex, one of the following cases occurs:
\begin{enumerate}
\item
If $v$ is the lowest end point of both edges,
two new segments are introduced.
Based on the slope of the edges bounding each segment,
we record an event at which the distance between the edges
is exactly $g$, if they are long enough.
\item
If $v$ is the highest end point of both edges that meet at $v$,
three segments on the sweep line are merged (when the sweep line
is before $v$, three segments are created by the edges incident
to $v$, at $v$, there are two such segments, and
after $v$, they merge into one).
We also record an event for the location at which the length of
the remaining segment becomes $g$ in the plane.
\item
If $v$ is the highest end point of one edge and the lowest
end point of another,
the event scheduled for the location at which the length of
each of the two incident segments on the sweep line are $g$
may need to be updated.
\end{enumerate}
Note that since the sweep line stops at $n$ vertices and at each
vertex only a constant number of event points are added,
the total number of event points is $O(n)$.
\end{proof}

\begin{theorem}
\label{tonealg}
The stay map $M$ of a trajectory $T$ with $n$ vertices in $R^1$
can be computed in $O(n \log n)$ time.
\end{theorem}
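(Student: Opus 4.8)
The plan is to assemble the four preceding lemmas into a binary-search procedure. First I would invoke Lemma~\ref{toneevents} to compute the set $E$ of event points of $T$ in $O(n \log n)$ time, recalling that $|E| = O(n)$. By Lemma~\ref{loneends}, each endpoint of $M$ is either an event point or lies at distance $s$ from one, so both endpoints belong to the candidate set
\[
C = E \cup \{ e + s : e \in E \} \cup \{ e - s : e \in E \},
\]
which still has $O(n)$ elements; I would sort $C$ in $O(n \log n)$ time.

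The crucial structural facts are that $M$ is a single interval $[L, R]$ by Lemma~\ref{lonecon}, and that $L, R \in C$, so it remains only to locate these two endpoints among the sorted candidates. The idea is that the directional test of Lemma~\ref{lonecheck} turns each endpoint into a monotone predicate. For the left endpoint, the predicate ``$L \le p$'' is equivalent to ``$M$ does not lie strictly to the right of $p$'', which Lemma~\ref{lonecheck} evaluates in $O(n)$ time; this predicate is false for candidates below $L$ and true from $L$ onward, so a binary search over the sorted $C$ pins down $L$ after $O(\log n)$ tests, i.e.\ in $O(n \log n)$ time overall. A symmetric binary search using ``$p \le R$'' (equivalently, ``$M$ does not lie strictly to the left of $p$'') returns $R$ in the same time, and outputting $[L, R]$ finishes the computation within the claimed $O(n \log n)$ bound.

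I do not expect a genuine obstacle here, since the heavy lifting is already done by the earlier lemmas; the main conceptual point is simply recognizing that continuity (Lemma~\ref{lonecon}) combined with the three-way directional test (Lemma~\ref{lonecheck}) makes the endpoints binary-searchable, which is exactly what lowers the naive $O(n^2)$ cost of testing every candidate individually down to $O(n \log n)$. The only detail requiring care is the degenerate case $M = \emptyset$, where $T$ has no stay point at all; I would detect this by checking that the located left and right endpoints actually bracket a nonempty interval, and otherwise report that the stay map is empty.
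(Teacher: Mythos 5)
Your proposal is correct and follows essentially the same route as the paper's own proof: compute the event points via Lemma~\ref{toneevents}, form the $O(n)$ sorted candidate set including points at distance $s$ from event points (justified by Lemmas~\ref{lonecon} and~\ref{loneends}), and binary-search for each endpoint of $M$ using the $O(n)$ directional test of Lemma~\ref{lonecheck}, for $O(n \log n)$ total time. Your explicit treatment of the empty stay map is a small additional care point the paper leaves implicit, but it does not change the argument.
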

\begin{proof}
Lemma~\ref{toneevents} implies that the set of event points of
$T$ can be computed with the time complexity $O(n \log n)$.
From this set, we can obtain an ordered sequence of event points
and points at distance exactly $s$ from them in $O(n \log n)$ time
(note that the length of this sequence is still $O(n)$).
Based on Lemma~\ref{loneends}, $M$ starts and ends at a point
of this sequence.
Also, Lemma~\ref{lonecheck} implies that we can decide if any of the
end points of $M$ appears before or after any point in $O(n)$ time.
Therefore, we can perform a binary search on the sequence obtained
from the event points of $T$ to find the left and the right end points of $M$.
Since the length of the sequence is $O(n)$, the time complexity
of the binary search is $O(n \log n)$.
\end{proof}

Unfortunately, this algorithm cannot be adapted for two-dimensional
trajectories, because their stay maps may no longer be continuous.

\section{Stay Maps of Two-Dimensional Trajectories}
\label{stwo}
We use the notation $P(a, b)$ to denote the region that contains
the lower left corners of all squares of side length $s$ that
contain at least one point of the sub-trajectory $T(a, b)$.
We also use $M(a, b)$ to indicate the stay map of the sub-trajectory $T(a, b)$.
We assume that trajectory $T$ starts at time $0$ and
has total duration $D$.
It is clear that every point in the stay
map of $T$ must appear in $P(t, t + g)$ for any value
of $t$, where $0 \le t \le D - g$ (because the entity cannot be
outside a stay point of $T$ for more than time $g$).
Therefore, the stay map of $T$ is the intersection of $P(t, t + g)$
for every possible value of $t$, $0 \le t \le D - g$.
This suggests the general scheme demonstrated in Algorithm~\ref{atwoexact}
for finding the stay map of a two-dimensional trajectory, assuming $D > g$.

\begin{alg}
\label{atwoexact}
Let $T$ be two-dimensional trajectory with $n$ edges and total duration $D$.
Compute the stay map of $T$ ($M(0, D)$) as follows.
\begin{enumerate}
\item Compute $P(0, g)$, as the union of polygons $P(u, v)$,
for all edges $uv$ in $T(0, g)$.
\item Let $M(0, g)$ be $P(0, g)$.  This is not strictly correct as
$M(0, t)$ must include the complete plane when $t \le g$ and its
value changes to a subset of $T(0, g)$ for any value of $t > g$.
This simplifying assumption, however, does not affect the
correctness of the algorithm, since $D > g$.
\item Incrementally compute $M(0, D)$ as follows.
Compute $M(0, b)$ from $M(0, a)$, in which
$M(0, a)$ is the last computed stay map and
$b$ is the smallest value after $a$, such that $b - g$
or $b$ is the timestamp of a trajectory vertex.
Let $V$ be the difference between $M(0, a)$ and $M(0, b)$
(note again that $M(0, b)$ is a subset of $M(0, a)$).
After computing $V$, we obtain $M(0, b)$ by excluding
$V$ from $M(0, a)$.
\end{enumerate}
\end{alg}

\begin{figure}
	\centering
	\includegraphics[width=\linewidth]{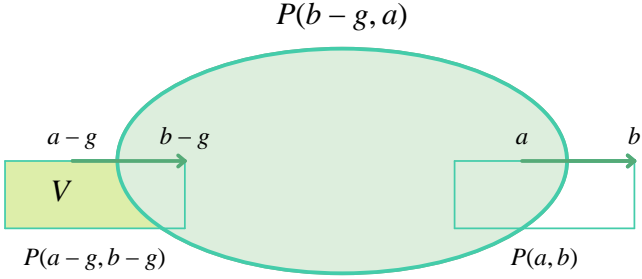}
	\caption{The difference $V$ in Algorithm~\ref{atwoexact},
	when $P(a - g, b - g)$ and $P(a, b)$ do not overlap.}
	\label{fvcomp}
\end{figure}
The core of Algorithm~\ref{atwoexact} is the computation of the
difference $V$.
By the choice of $b$, $T(a - g, b - g)$ and $T(a, b)$ are both line
segments.  The value of $V$ depends on these segments and $T(b - g, a)$.

Let $r$ be a square, whose lower left corner is in $V$ and
let $a - g + \delta$ be the time of entity's departure from
$r$ before time $b - g$.  Since the lower left corner
of $r$ is in $V$, $r$ is not visited by the entity in
the sub-trajectory $T(a - g + \delta, a + \delta)$.
In other words, any point not in $P(a - g + \delta, b - g)$,
$P(b - g, a)$, and $P(a, a + \delta)$ for any value of $\delta$
in $0 \le \delta \le g$ cannot be a stay point.

To make the computation of $V$ easier, we define $V'$ as
follows ($V'$ is very similar to $V$, except that it ignores $P(b - g, a)$):
{
\scriptsize
\[
V' = \bigcup\limits_{0 \le \delta \le g} {P(a - g, a - g + \delta) \setminus
\left( P(a - g + \delta, b - g) \cup P(a, a + \delta)\right)}
\]
}
$V'$ contains the lower left corners of all squares that have been
visited during the interval $(a - g, a - g + \delta)$,
but have not been visited in
$(a - g + \delta, b - g)$ or $(a, a + \delta)$
for some $\delta$ in $0 \le \delta \le g$.
Then, $V = V' \setminus P(b - g, a)$.

If no square intersects both $T(a - g, b - g)$ and $T(a, b)$,
$V'$ is $P(a - g, b - g)$.  This case is shown in Figure~\ref{fvcomp},
in which $V'$ is the rectangle on the left.
Otherwise, $V'$ depends on the relative
speed of the entity in these sub-trajectories.
In both cases, $V'$ is a polygon of constant complexity and can be computed
in constant time.
We do not discuss the details of the computation
of $V'$ in this paper, however.
Since $T(b - g, a)$ consists of $O(n)$ edges,
$P(b - g, a)$ is the union of $O(n)$ simple polygons.
Therefore, $V' \setminus P(b - g, a)$ is also the union of a
set of polygons with the total complexity $O(n)$.
Let $V_t$ be the union of the differences $V$ for all
iterations of the third step of Algorithm~\ref{atwoexact}
(note that the complexity of $V_t$ is $O(n^2)$).
When the algorithm finishes, $M(0, D)$ is $P(0, g) \setminus V_t$.
Since the computation of $V_t$ requires finding the union of polygons
with the total complexity $O(n^2)$, an $O(n^2)$ implementation
of this exact algorithm seems unlikely.

\subsection{Approximate Stay Maps of Two-Dimensional Trajectories}
In Algorithm~\ref{atwo}, we consider $P(t, t + g)$ for limited discrete
values of $t$ to compute \emph{approximate stay maps} of
a trajectory (Definitions~\ref{dstaypointapx} and \ref{dstaymapapx}),
to improve the time complexity of Algorithm~\ref{atwoexact}.

\begin{defn}
\label{dstaypointapx}
A $(1 + \epsilon)$-approximate stay point of a trajectory $T$ in $R^2$
is a square of fixed side length $s$,
such that the entity is never outside it for more than $g + \epsilon g$ time.
\end{defn}

\begin{defn}
\label{dstaymapapx}
A $(1 + \epsilon)$-approximate stay map of a trajectory $T$ in $R^2$
is the region containing the lower left corners of all exact stay points
of $T$ and possibly the lower left corners of some of its
$(1 + \epsilon)$-approximate stay points.
\end{defn}

\begin{alg}
\label{atwo}
Let $T$ be a trajectory in $R^2$ with $n$ edges and total
duration $D$ and let $\epsilon$ be any real positive constant no greater than $D / g$.
Compute a $(1 + \epsilon)$-approximate stay map of $T$ as follows.
\begin{enumerate}
\item Compute $P(t, t + g)$ for $t = i\lambda$ for integral values of $i$
from $0$ to $D / \lambda$, where $\lambda$ is $\epsilon g$.
We call $P(t, t + g)$ for any value of $t$ a snapshot of $T$.
\item Compute the intersection of these snapshots.  For this, we can
use the topological sweep of Chazelle and Edelsbrunner~\cite{chazelle92}
on the subdivision of the plane induced by the edges of the
snapshots and include in the output the faces present in all snapshots.
\end{enumerate}
\end{alg}

\begin{theorem}
\label{ttwo}
For trajectory $T$ in $R^2$ with $n$ edges and total
duration $D$ and any real positive constant $\epsilon$ no greater than $D / g$,
Algorithm~\ref{atwo} computes a $(1 + \epsilon)$-approximate stay map of $T$.
\end{theorem}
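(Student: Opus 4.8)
The plan is to verify the two defining properties of a $(1+\epsilon)$-approximate stay map (Definition~\ref{dstaymapapx}) directly from the sampling scheme, using the earlier observation that the exact stay map equals the intersection of $P(t, t+g)$ over all $t$ with $0 \le t \le D - g$. Write $\lambda = \epsilon g$ and, for a point $c$ of the plane, let $\square_c$ denote the square of side length $s$ with lower left corner $c$. By construction the output of Algorithm~\ref{atwo} is the set of $c$ lying in every sampled snapshot, i.e.\ with $c \in P(i\lambda, i\lambda + g)$ for every sampled $i$; equivalently, for each sampled window $[i\lambda, i\lambda + g]$ the entity visits $\square_c$ at some moment of that window. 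Sampling the above continuum of constraints at spacing $\lambda$ can only enlarge the exact stay map, so the task is to show it is enlarged by no more than the allowed slack: that (completeness) the corner of every exact stay point survives the intersection, and (soundness) every surviving corner is at least a $(1+\epsilon)$-approximate stay point.

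For completeness, suppose $c$ is the corner of an exact stay point, so the entity is never outside $\square_c$ for a continuous interval of length more than $g$. Fix any sampled window $[i\lambda, i\lambda + g] \subseteq [0, D]$. Its length is exactly $g$, so the entity cannot be outside $\square_c$ throughout it; hence $T(i\lambda, i\lambda + g)$ meets $\square_c$ and $c \in P(i\lambda, i\lambda + g)$. As this holds for every such window, $c$ lies in the intersection and therefore in the output.

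For soundness I would argue the contrapositive, and this is the step that carries the discretization error. Suppose $c$ is not a $(1+\epsilon)$-approximate stay point, so the entity is outside $\square_c$ throughout some interval $[t_1, t_2] \subseteq [0, D]$ with $t_2 - t_1 > g + \lambda = (1+\epsilon) g$. Then $[t_1, t_2 - g]$ has length $(t_2 - t_1) - g > \lambda$, so it contains a grid value $i\lambda$, and $0 \le i\lambda \le D - g$ places the corresponding window inside the trajectory. The window $[i\lambda, i\lambda + g]$ is contained in $[t_1, t_2]$, during which $\square_c$ is never visited, so $c \notin P(i\lambda, i\lambda + g)$ and $c$ is missing from at least one sampled snapshot; thus $c$ is excluded from the output. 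Equivalently, every $c$ in the output has each maximal absence interval of length at most $(1+\epsilon) g$, i.e.\ is a $(1+\epsilon)$-approximate stay point.

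The heart of the argument is the elementary grid estimate in the soundness step: spacing the samples $\lambda = \epsilon g$ apart guarantees that any absence interval longer than $(1+\epsilon) g$ fully contains at least one sampled length-$g$ window, which is exactly the slack that turns the exact intersection over all $t$ into a $(1+\epsilon)$-approximation. The one place demanding care is the boundary near $t = D$: windows with $i\lambda + g > D$ cover a sub-trajectory shorter than $g$, and a genuine stay point may legitimately avoid $\square_c$ there, so those partial windows must not be allowed to prune the intersection. I would therefore make explicit that the intersection is taken only over windows fully contained in $[0, D]$ (equivalently $i\lambda \le D - g$), and check that this restriction leaves both inclusions above intact while not dropping any exact stay point.
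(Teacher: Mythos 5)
Your proof is correct and takes essentially the same route as the paper's: completeness is immediate from the intersection structure, and soundness rests on the same grid estimate --- any absence interval longer than $(1+\epsilon)g$ must fully contain a sampled length-$g$ window --- which the paper phrases as a contradiction (taking the largest grid point $t_1 \le t_b$ and the snapshot starting at $t_1 + \lambda$) while you phrase it as a contrapositive. Your closing caveat is, however, a genuine refinement rather than pedantry: Algorithm~\ref{atwo} as stated samples $i$ up to $D/\lambda$, so some windows overrun $t = D$, and the paper's one-line completeness claim silently fails for such truncated windows, since an exact stay point may legitimately go unvisited throughout a final stretch of duration up to $g$. Restricting the intersection to windows with $i\lambda \le D - g$, as you propose, repairs this while leaving soundness intact, because the window produced by your pigeonhole step automatically satisfies $i\lambda \le t_2 - g \le D - g$.
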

\begin{proof}
Since the output of Algorithm~\ref{atwo} is the intersection of
different snapshots of $T$, the lower left corner of every stay point
must be inside it.  Therefore, it suffices to show that every point
in the output of the algorithm is the lower left corner of a
$(1 + \epsilon)$-approximate stay point.

Let $r$ be a square whose lower left corner is in the region
reported by this algorithm.
Suppose that the entity leaves $r$ at $t_b$ and reenters $r$ at $t_e$.
We can set $t_b = 0$ for handling the initial part of the trajectory,
and, if the entity never returns to $r$, we can set $t_e = D$.
To prove the approximation factor, we show that $t_e \le t_b + g + \epsilon g$.
Let $i$ be the largest index such that $\lambda i \le t_b$
and let $t_1 = \lambda i$.
We show that the entity must return before time
$t_1 + \lambda + \lambda/\epsilon$.
Otherwise, $P(t_1 + \lambda, t_1 + \lambda + \lambda/\epsilon)$,
which is a snapshot since $\lambda/\epsilon$ is equal to $g$,
does not contain the
lower left corner of $r$ (this is demonstrated in Figure~\ref{fsnap})
and this contradicts the assumption
that it is included in the region returned by the algorithm.
Therefore, the entity cannot be outside $r$ for longer than
$\lambda/\epsilon + \lambda$, and $t_e \le t_b + g + \epsilon g$.
\end{proof}

\begin{figure}
	\centering
	\includegraphics[width=\linewidth]{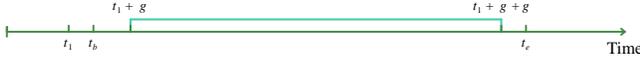}
	\caption{The entity leaves a square at $t_b$ and returns at $t_e$.
	If $t_e - t_b$ is larger than $g + g\epsilon$, there is a
	snapshot in which the entity is outside the square.}
	\label{fsnap}
\end{figure}
\begin{theorem}
\label{ttwoanalysis}
The time complexity of Algorithm~\ref{atwo} is $O(n^2 / \epsilon^2 + \sigma^2 / \epsilon^2)$,
in which $\sigma$ is $D / g$.
\end{theorem}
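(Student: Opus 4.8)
My plan is to charge the whole running time to the planar subdivision that the topological sweep traverses, and to bound the number of segments that define that subdivision. First I would count the snapshots. Since $t$ ranges over $i\lambda$ for $0 \le i \le D/\lambda$ with $\lambda = \epsilon g$, there are $N = D/\lambda + 1 = \sigma/\epsilon + 1 = O(\sigma/\epsilon)$ of them. Each snapshot $P(t, t+g)$ is a union of the constant-complexity polygons $P(u,v)$, one for every trajectory edge $uv$ whose time interval meets the window $(t,t+g)$; computing each $P(u,v)$ takes constant time, so the segments a snapshot contributes to the overall arrangement number proportional to the edges meeting its window.

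The crux is to bound the total number $L$ of these segments summed over all snapshots, since the naive product $N\cdot n = O(n\sigma/\epsilon)$ is too weak. Instead I would charge each trajectory edge to the windows it meets: an edge of time-duration $d$ meets the snapshots whose left endpoint $t = i\lambda$ lies in an interval of length $g+d$, hence it occurs in at most $(g+d)/\lambda + O(1) = O(1/\epsilon + d/(\epsilon g))$ snapshots. Summing over the $n$ edges and using $\sum d = D$ gives $L = O(n/\epsilon) + O(D/(\epsilon g)) = O((n+\sigma)/\epsilon)$. This is simultaneously the cost of the first phase of Algorithm~\ref{atwo} and the number of segments handed to the second phase.

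To finish, the arrangement induced by these $L$ segments has worst-case complexity $O(L^2)$, since segments coming from snapshots taken at different times may cross arbitrarily, and the topological sweep of Chazelle and Edelsbrunner~\cite{chazelle92} traverses this subdivision in time proportional to its size. During the sweep I would keep, for every snapshot, a depth counter (so that depth at least one signals membership in that snapshot's union) together with a single counter recording how many snapshots currently cover the face; crossing an arrangement edge updates a constant amount of this information, so a face present in all $N$ snapshots is recognized within the same $O(L^2)$ budget. Hence the running time is $O(L^2) = O\!\left((n+\sigma)^2/\epsilon^2\right)$, and expanding the square while absorbing the cross term through $2n\sigma \le n^2+\sigma^2$ yields the stated $O(n^2/\epsilon^2 + \sigma^2/\epsilon^2)$. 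The step I expect to be the main obstacle is the counting: obtaining the additive $O((n+\sigma)/\epsilon)$ rather than the multiplicative bound hinges on the amortized charging above, and one must verify that the per-face coverage test does not inflate the cost past the arrangement size.
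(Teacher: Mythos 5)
Your proposal is correct and follows essentially the same route as the paper: bound the total number of segments over all snapshots by $O((n+\sigma)/\epsilon)$ via an amortized charging argument (the paper charges vertices---each original vertex lies in at most $1/\epsilon$ windows, plus two clipped endpoints per snapshot---whereas you charge edges using $\sum d = D$, which is an equivalent count), then invoke the $O(m^2)$ arrangement size and topological sweep of Chazelle--Edelsbrunner and expand the square. Your additional detail on maintaining per-snapshot depth counters to recognize faces covered by all snapshots is a sound elaboration of a step the paper leaves implicit, not a different approach.
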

\begin{proof}
A subdivision of the plane by $m$ line segments
has $O(m^2)$ faces and can be swept with the same
time complexity \cite{chazelle92}.
Moreover, the number of the segments of each snapshot
depends on the number of vertices of the sub-trajectory
inside that snapshot (the region containing the lower left
corners of the squares that intersect an edge of the sub-trajectory
is a polygon with a constant number of sides).
We, therefore, count the total number of vertices
of the sub-trajectories in all snapshots.  There are two types
of trajectory vertices in each snapshot: those present in the
original trajectory $T$ and the end points of the snapshot,
which may not coincide with a trajectory vertex.
Since the duration of each snapshot is $g$ and the difference
between the start time of contiguous snapshots is $\epsilon g$,
each trajectory vertex appears in at most $1 / \epsilon$ snapshots.
Therefore, the total number of vertices is at most $n / \epsilon + 2D / (\epsilon g)$
and the time complexity of Algorithm~\ref{atwo} is
$O(n^2 / \epsilon^2 + \sigma^2 / \epsilon^2)$.
\end{proof}

It is not difficult to see that the stay map of a two-dimensional
trajectory may contain $\Theta(n^2)$ faces and therefore we cannot
expect an algorithm with the worst-case time complexity $o(n^2)$.
In what follows, we demonstrate a trajectory with $O(n)$ edges and
a stay map of $\Theta(n^2)$ faces.
Trajectory edges are added incrementally, as demonstrated in
Figure~\ref{fgrid}, in which filled regions represent the stay map
(except for $t \le g$, in which they represent $P(0, t)$) and
arrows show trajectory edges.  We assume that the entity starts at
time $0$ and position $(0, 0)$.

\begin{figure*}
	\centering
	\includegraphics[width=\linewidth]{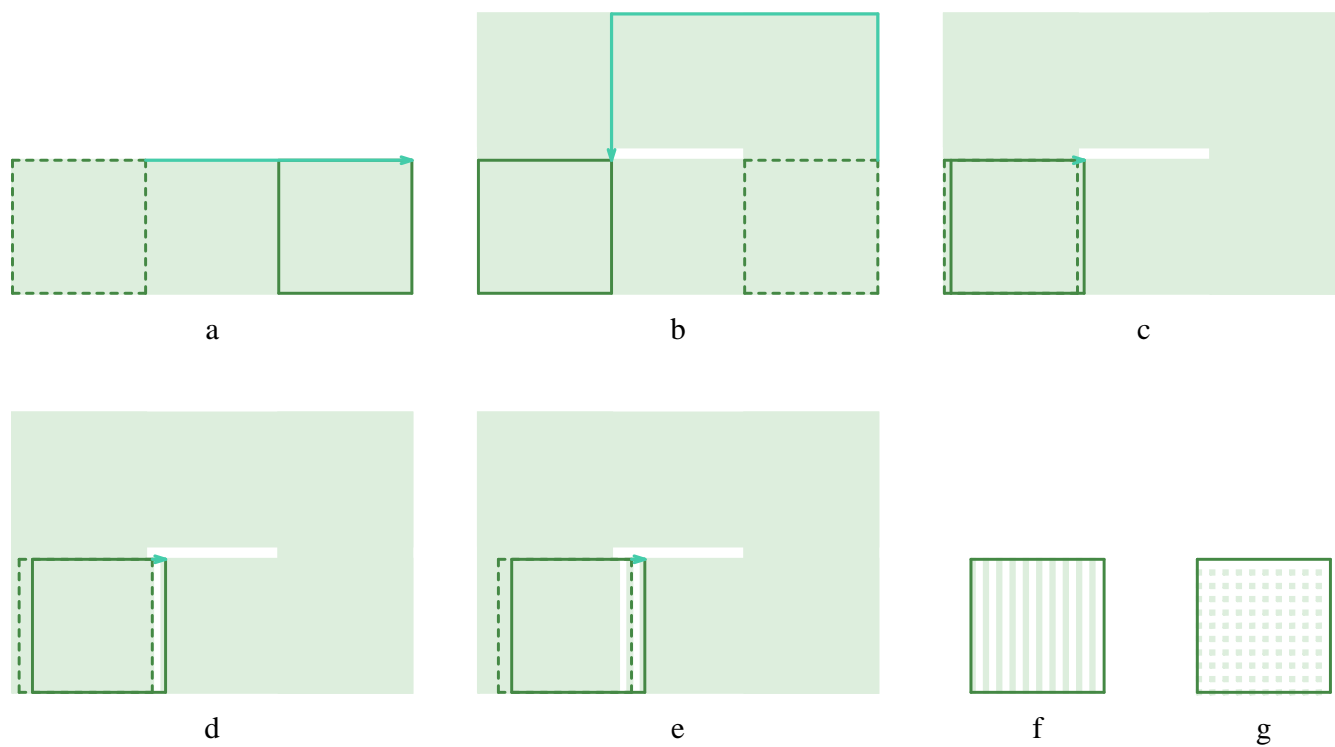}
	\caption{A trajectory with a stay map of $O(n^2)$ faces.
	The arrows indicate trajectory edges and filled
	regions indicate the stay map at each step.}
	\label{fgrid}
\end{figure*}

Generate $m$ vertical strips as follows.
Add the second vertex at $(2s, 0)$ with timestamp $g/2$ (Figure~\ref{fgrid}.a).
Move the entity to its initial position using three vertices as shown in Figure~\ref{fgrid}.b;
the position of the last vertex is $(0, 0)$ and its timestamp is $g - g / 2m$.
Create the vertical strips as follows:
after every $g / m$ time, quickly move the
entity by $s / m$ to the right (Figures~\ref{fgrid}.c--\ref{fgrid}.e).
After $m$ such steps and waiting for at least $g$,
the current stay map consists of $m$
vertical strips (Figure~\ref{fgrid}.f).

The same trajectory we used for creating vertical strip can be used
for creating horizontal strips after rotating the trajectory 90
degrees.  If this is performed after the previous step, however, this
would result in a stay map (Figure~\ref{fgrid}.g), which
consists of $\Theta(m^2)$ small squares.

\section{Concluding Remarks}
The definition of stay points with bounded gaps can be easily
extended to multiple trajectories.  A multi-trajectory stay
point is a square that is visited by at least one of the entities
in any interval of duration $g$.  It seems possible to compute
such stay maps, by modifying Algorithm~\ref{atwo} to compute
the intersection of the union of the snapshots of different entities.
However, the time complexity of this algorithm may no longer be $O(n^2)$,
where $n$ is the total number of trajectory vertices.
Finding an efficient exact algorithm for the multi-trajectory
version of the problem seems interesting.

As shown in Section~\ref{stwo}, the complexity of a stay map can be
$\Theta(n^2)$, rendering an algorithm with the time complexity
$o(n^2)$ impossible.  This bound however is not tight and a natural
question is whether it is possible to find the exact stay map of
two-dimensional trajectories in $O(n^2)$ time.  Also, by limiting the size
of the output, for instance by finding only one of the stay points, a
more efficient algorithm is not unlikely.
Furthermore, it seems interesting to study the problem in higher
dimensions.

\section*{Acknowledgement}
We thank Neda Ahmadzadeh Tori for the inspiring discussions
that led to the study of this problem.
\small
\bibliographystyle{abbrv}

\end{document}